\newcommand{\url}{\text}}
\newtheorem{proposition}{Proposition}
\DeclareRobustCommand{\FIN}{%
    \ifmmode % if math mode, assume display: omit penalty etc.
    \else \leavevmode\unskip\penalty9999 \hbox{}\nobreak\hfill
    \fi
    $\bullet$ \vspace{5mm}}
\begin{document}

\begin{center}
{\Large \textbf{Resistant estimates for high dimensional and
functional data based on random projections}}

%{\large Ricardo Fraiman$^{\ast}$ and  Marcela Svarc$^{\ast\ast}$}

{\large Ricardo Fraiman $^{\ast}$, and Marcela Svarc $^{\ast\ast}${\footnote{Corresponding author:
Marcela Svarc, Departamento de Matem\'aticas y Ciencias, Universidad  de San Andr\'es, Vito Dumas 284, Victoria (1644), Buenos Aires, Argentina. Email: msvarc@udesa.edu.ar}}}

\

\noindent \emph{$^{\ast}$Universidad de San Andr\'es, Argentina and Universidad de la Rep\'ublica, Uruguay.}
\\[0pt]\emph{$^{\ast\ast}$ Universidad de San Andr\'es and CONICET, Argentina. }
\
%{\large  Ricardo Fraiman and Marcela Svarc} \footnote{
%Ricardo Fraiman is Professor of Statistics, Universidad de San
%Andr\'es, 1644 Victoria, Buenos Aires, Argentina, and Centro de
%Matem\'atica, Universidad de la Rep\'ublica, Montevideo, Uruguay
%(e-mail:\emph{rfraiman@udesa.edu.ar}).  Marcela Svarc is Associate
%Professor of Statistics, Universidad de San Andr\'es, 1644 Victoria,
%Buenos Aires, Argentina and CONICET (e-mail:\emph{msvarc@udesa.edu.ar}). This research was partially supported by Grant  pict 2008-0921  from anpcyt, Argentina.
%
%Corresponding author: Marcela Svarc, Universidad de San Andr\'es,
%Vito Dumas 284,1644 Victoria, Buenos Aires, Argentina. Email: msvarc@udesa.edu.ar. }

\end{center}

%
%\noindent\emph{$^{\ast}$Departamento de Matem\'atica y Ciencias,
%Universidad de San Andr\'es, Argentina and CMAT, Universidad de la
%Rep\'ublica, Uruguay}
%\\[0pt]\emph{$^{\ast\ast}$Departamento de Matem\'atica y Ciencias,
%Universidad de San Andr\'es, Argentina and CONICET}
%\end{center}

% This research was partially supported by Grant  pict 2008-0921  from anpcyt, Argentina.
\
%\title{Resistant estimates for high dimensional and functional data based on random projections}
%\author{Ricardo Fraiman and Marcela Svarc}
%\date{}
%\maketitle

\begin{abstract}
We herein propose a new robust estimation method based on random projections that is adaptive and, automatically produces a robust estimate, while enabling easy computations for high or infinite dimensional data. Under some restricted contamination models, the procedure is robust and attains full efficiency. We tested the method using both simulated and real data.
\newline \noindent\textbf{Keywords:}
Robust Estimates, Location and Scatter Estimates, Trimming Estimates.
%\newline \noindent{\em A.M.S. 1980 subject classification:} {\em Primary: }
%62H30, 68T10; {\em secondary: }62G20.
\newline \noindent\textbf{Running Title:}
Random Projection Robust Estimates.
\end{abstract}

\section{Introduction}

In the problem of robust estimation in  high dimensional and functional data, a number of different issues arise that cause the well known robust multivariate estimators to fail, and new ideas are required to address these.

Before stating our proposal, we first review the origins
of robustness in order to identify some of the main ideas necessary to
guide us in this complex problem.

Ronchetti (2010) who stated that ``the primary goal of robust statistics is
the development of procedures which are still reliable and
reasonably efficient under small deviations from the model, i.e.
when the underlying distribution lies in a neighborhood of the
assumed model. Robust statistics is then an extension of
parametric statistics, taking into account that parametric models
are at best only approximations to reality. The field is now some
50 years old. Indeed one can consider Tukey (1960), Huber (1964),
and Hampel (1968) the fundamental papers which laid the
foundations of modern robust statistics."

\bf The median. \rm For one dimensional data, the oldest robust
estimates are the median and the $\alpha$--trimmed means. The
median is in several respects the `most robust' consistent
possible location estimate, and great efforts have been made to extend the median and the trimmed means to higher dimensions.
The extension of these two simple notions to multivariate data  took place via the two main avenues of impartial trimmed means and the concepts of data depth.
For the multivariate case Gordaliza (1991) introduced the procedure of impartial trimmed means, which was then
extended by Cuesta and Fraiman (2006)  for functional data to obtain resistant estimates of the center of a functional distribution.
More recently, several proposals considering trimming procedures have been introduced, among them Dolia et al (2007) and Cuevas et al (2008).
% Dolia et al (2007) found the minimum volume covering ellipsoid
% in a kernel defined feature space. Cuesta et al (2008) considered
%  a trimming procedure in the elliptical model after which
%  they compute the classical likelihood estimates and Dyckerhoff and Mosler (2011)
%introduced a general notion of trimmed regions for empirical distributions in
%d-space.
A number of depth definitions have been introduced for multivariate data, the best known being the
 Tukey Depth (Tukey, 1975) and  the Simplicial Depth (Liu, 1988) and an interesting review of this subject is provided by Serfling (2006).
 For the case of functional data, several proposals have recently been made, including
 Fraiman and Muniz (2001),  Cuevas and Fraiman (2009), Gervini (2010), among others.

\bf Diagnostic and automatic robust procedures. \rm Two
complementary approaches were developed to achieve robustness.
Broadly speaking, the different procedures used  to flag outliers are known as diagnostic procedures, while the term `robustness' refers to statistical procedures that are insensitive to the effect of outliers. As observed by Serfling (2006), for each depth function one can find a function associated with outlyingness.
Hence, several of the foregoing procedures can help in the detection of outliers, such as the proposals of Fraiman and Muniz (2001) and Gervini (2010).  Febrero et al (2008) introduced a depth measure for functional data that they used for the identification of outliers.

Many procedures for the identification of outliers have been introduced, among them Rousseeuw and  Van Zomeren (1990), Pe\~{n}a and Prieto (2001) and
 Febrero et al (2007).
%Another procedure for the identification of outliers was introduced by Rousseeuw and  Van Zomeren (1990). Their
%proposal was based on the computation of distances using very robust estimates of location and covariance, which are more
%suitable for exposeing the outliers and then performing a regression.
%Pe\~{n}a and Prieto (2001)  presented a multivariate outlier-detection procedure and a robust estimator for
%the covariance matrix, based on the use of information obtained from projections onto the directions that
%maximize and minimize the kurtosis coefficient of the projected data.
%Filzmoser et al (2007) introduced a rapid procedure for the identification of outliers that is particularly suitable for
%high dimensional data and large data sets.
%Febrero et al (2007) proposed a method to
%obtain estimates of location and scale
%and confidence sets of the center of a functional distribution, and
%also proposed a distance based procedure to identify outliers.
%Finally, Gervini (2009) introduced a procedure to detect outliers in functional irregularly sampled functional data by means of a series expansion.

%%%%%%%%%%%%%%%
In robustness, a class of estimators  can be  defined in general
in terms of the optimization of a specified target functional for the
population version of the estimate. Maronna et al (2006), in Chapter 6, presented a
detailed review of the different estimates of location and scatter
 for the multivariate case, describing the properties of each.

In diagnostics, the  problem lies in the identification of sub--samples whose
 deletion maximally changes a statistic of interest as measured by an appropriate target
function.

The diagnostic procedure straightforward for one dimensional data, but becomes much
more difficult for multivariate data. This is also the case for
automatic robust procedures.

%%%%%%%%%%%%%%%%%

\bf Adaptive estimates. \rm In the early 1970's, some adaptive
estimates were proposed for the location model. The word adaptive
implied that the estimate adjusted itself in some way to the data concerned.
The simplest idea was to consider an M-estimate using Huber's
$\psi_K$ function, where the parameter $K$ (that compromises
robustness and efficiency) depends on the data, i.e. $K=K(X_1,
\ldots, X_n)$ (see for instance Yohai, 1974). Roughly speaking, if
there are no outliers on the sample, the estimate would tend to
choose large values of $K$, while choosing smaller values of $K$
correspond to samples that have outliers. These interesting ideas
 have been neglected ever since.

\bf Neighborhoods. \rm The question arises whether full neighborhoods
should be used for some distances between probabilities, or contamination neighborhood should. As mentioned above, we
expect robust procedures to be stable under a small
deviation from the model, i.e., when the underlying true
distribution $P$ lies in the neighborhood of the assumed model
$P_0$. Some descriptions of neighborhoods are therefore required. We consider two types, the first being, $
\mathcal P_{\epsilon, D} = \{P: D(P,P_0) < \epsilon\},$ where $D$ is an appropriate distance between probabilities, derived from, for example Prohorov, Kolgomorov or Wasserstein
metrics. The second type is of the form $
\mathcal P_{\epsilon} = \{P: P = (1-\epsilon) P_0 + \epsilon Q\},$ where $Q$ is an arbitrary probability measure. Typically,
$\mathcal P_{\epsilon} \subset \mathcal P_{\epsilon, D}$, which is
the case for Prohorov or Kolmogorov (total variation)
distances, for example, $\mathcal P_{\epsilon}$ is particularly appealing, however.
It may be perceived that with probability $(1-\epsilon)$, an
observation of the sample comes from a distribution $P_0$, while
for probability $\epsilon$ it comes from $Q$.
Recently, Alqallaf et al (2009) considered non standard data contamination
models, such as componentwise outliers.

Several notions and measures of robustness have been defined to
provide a detailed description of the robustness characteristics of an
estimate.
 The main ideas are:
\begin{itemize}

\item
Solving minimax problems in a neighborhood of the ``true" model.
%%%%%%%%%%%%%%%%%%%%%%%%%%%%
The first approach to formalize the robustness problem was that of Huber`s
(1964, 1981) minimax theory, where the statistical problem is
viewed as a game between the Nature (which chooses a distribution
in the neighborhood of the model) and the statistician (who
chooses a statistical procedure in a given class). The
statistician achieves robustness by constructing a minimax
procedure that minimizes a lost criterion in which the worst
possible distribution in the neighborhood is considered.
% For the univariate
%location problem, Huber solved the problems of \it minimax
%variance and minimax bias, \rm finding as solutions the Huber
%$\psi_K$ function for the first problem and the median for the
%second. However, the minimax mean square error problem is
%still an open question. A related minimax problem was solved
%by Fraiman et al (2001) (see also Yohai and Zamar, 1997).
%%%%%%%%%%%%%%%%%%%%%%%%%%%%

\item
Qualitative robustness. The notion of qualitative robustness was
introduced by Hampel (1968) in terms of the equicontinuity of the distribution
of the estimate with respect to the Prohorov distance between
probabilities.

\item Measures of robustness. We herein briefly describe the most important measures of robustness. The  influence curve is
the infinitesimal approach was introduced by Hampel (1968) within the
framework of robust estimation.
%His idea was to view the quantities
%of interest as functionals of the underlying distribution $P$ and
%to use their linear approximations to study their behavior in a
%neighborhood of the central model $P_0$.
%The influence curve is
%the derivative of such a functional (Hampel, 1974), and
%describes the local stability of the functional.
The finite sample breakdown point (introduced by Donoho and Huber,
1983) of an estimate is the maximum number of observations that
can be replaced arbitrarily while the estimate remains bounded
(or far away from the boundary of the parameter space when the
parameter space is compact).
% This measure of robustness describes
%an extreme case in that it involves the worst possible
%contamination. It is also a very rough measure because, it only
%requires that the estimate remains bounded.
The maximal asymptotic bias as a function of the contamination fraction, measures the worst behavior of an
estimate under a fixed proportion of contamination.
% Hence, it is a
%very useful measure of robustness, and possibly the best, although it
%may be very difficult to compute in some cases.

\end{itemize}

\bf Equivariance properties. \rm We expect that any reasonable
estimate \it would only depend on the data, and not on the
coordinates axis  nor on the scale considered. \rm  In the one
dimensional problem, the equivariance properties are just the location and the
scale equivariance. In the multivariate case, this corresponds to
the location equivariance, the orthogonal equivariance, and the global scale
equivariance. These properties are still sensible in the infinite
dimensional case. However, in the finite
dimensional case, the affine equivariance properties often also required.
 However, this last condition is far from mandatory, and could also be a bad idea. For instance, for
contours of multivariate quantiles, the requirement of affine equivariance
implies that the contours will be convex sets, losing the important
features shown by the non--convex versions that are orthogonal,
translation and scale equivariant but not affine equivariant.
 We consider that the affine equivariance requirement is inadequate
 for both high and infinite
dimensional data. Alqallaf et al (2009) proved that standard
high-breakdown affine equivariant estimators propagate outliers
under componentwise contamination when the dimension $d$ is high.
There are many examples of estimates of location and scatter that are not affine equivariant
but have many appealing properties (see for instance Maronna and Zamar
(2002) or Fraiman and Pateiro-Lopez (2011)).
%Considering the same configuration of contamination Danilov (2010)
%considered the problem of robust estimation of the scatter matrix
%of an elliptical distribution, in which the estimates that he introduced
%were all translation and scale equivariant. Maronna and Zamar
%(2002) introduced an estimate based on the Gnanadesikan  and
%Kettering (1972) identity that was not affine equivariant, but
%was nevertheless very competitive from a computational point of view.
%Fraiman and Pateiro-Lopez (2011) introduced a projection-based
%definition of directional
% quantiles in separable Hilbert spaces and, proved that the directional quantiles were equivariant when
%subject to translations, homogeneous scale transformations and unitary transformations.

\bf Computational burden. \rm Most of the robust estimates for
multivariate data, particularly  those that have improved properties
of robustness, are very expensive from a computational point of view
and, their computation is unfeasible for high dimensional data.
This is also the case with some of the data depth based
estimates. In the multivariate case, several
attempts have been made in recent years to overcome these problems, by considering
resampling algorithms on the initial candidates that
 allow a substantial reduction in the number of
  candidates required to obtain a good approximation to the optimal solution
(see for instance Van Aelst and Rousseeuw, 2009 or Salibian and Yohai, 2006).
In the case of high dimensional or functional data, it is essential that the estimates can be
computed in a reasonable time.

\bf Random projections. \rm The idea of using random projections
has been already used by several authors in many different
contexts. An appealing theoretical framework for the use of random
projections can be found in  Cuesta, Fraiman and Ransford (2006,
2007).
%These authors obtain results that can be applied to several
% the goodness of fit problems. In the context of depth measures
%Cuevas et al (2007) introduce five depth measures three of them
%  based the use of random projections. Cuesta
% and Nieto (2008) propose a random depth which approximates the Tukey depth.
\

We herein propose a new robust estimation procedure based on
random projections that is data--adaptive, produces automatic robust
estimates and combines ideas from the diagnostic framework. In addition it
 has very low cost when applied to high or infinite dimensional data. The method yields
robust estimates that are location and scale equivariant and are also invariant under
unitary operators (orthogonal transformations in the finite dimensional case).

The remainder of the paper is organized as follows. In Section 2
we present our general method, and we exemplify with location and
scatter estimates. In Section 3 we provide some asymptotic results.
Section 4 is devoted to analysis of the performance of the procedure on
simulations of the finite dimensional case and also for the case of
functional data. In Section 5 we analyze an example using  real data.
Section 6 includes some final remarks.

\section{Our setup}

We start by fixing some notation. Let $\mathbb{H}$ be a real
separable Hilbert space, with inner product $<\cdot,\cdot>$, and the
induced norm  $\Vert \cdot \Vert_{\mathbb{H}} =: \Vert
\cdot\Vert$. $X$ denote a random element in $\mathbb{H}$ with
distribution $P$ in a contamination neighborhood of $P_0$, our
``target distribution".

The data are given by a random sample $X_1, \ldots, X_n$ of iid
random elements with the same distribution as $X$.

We are interested in estimating different parameters, including the
mean, the covariance operator, and the principal components of a
random element $X_0$ with distribution $P_0$ based on the sample
$X_1, \ldots, X_n$. The problem lies in the fact that the distribution of the data is given by $P$ rather than $P_0$.

We assume that $\mathbb E(X_0)=\mu_0$ is finite and that the
covariance operator $\Sigma_{X_0}$, is positive definite when
$\mathbb{H}$ is finite dimensional and compact and self-adjoint if
$\mathbb{H}$ is infinite dimensional.

Our aim is to introduce a general method of obtaining consistent
robust estimates in this framework that is both suitable for different problems
and computationally feasible.

 If we were able to select the
 sub--sample $X_{i_1}, \ldots, X_{i_m}$ of those random elements that
 have distribution $P_0$,  from the sample $X_1, \ldots, X_n$ ,
 the problem becomes trivial. We may apply
 usual ``non--robust" estimates to the ``good data".

We propose a trimming scheme based on random projections to
select a sub--sample $X_{i_1}, \ldots, X_{i_m}$, which we hereafter
denote the RT--procedure. The maximum number of observations to
 be trimmed is $\alpha$ $(0<\alpha<0.5)$. The goal ``is to
discard only those observations that are far apart from the bulk of the
data". Our method is adaptive in that it stops
searching for outliers in a data dependent way.

%%%%%%%%%%%%%%%%%%%%%%%%%%%%%%%%%%%%%%%%%%%%%%%%%%%%%%%%%%%%%%%%%%%%%%%%%%%%%%%%%%%%%%

In the presence of outliers that lie far away from the
bulk of the data, there are always several directions $u, \Vert u \Vert
=1$ (a set of positive probability on the unit sphere with respect
to the induced probability measure) where on the projected data,
$\{<X_1, u>, \ldots,<X_n, u>\}$, the outliers are  distant from
the majority of the observations. If an observation is an outlier
this phenomenon is observed in many directions. Therefore, the
maximal gap of the one dimensional projected data, will be large,
or more precisely, larger than expected. We recall that the maximal
gap is the maximum distance between two contiguous observations. On
the other hand, those observations that are deep in the bulk of
the data set also have deep one dimensional projections in every
direction.

 However, we may have a shadow cone, where the projected outliers
 are ``hidden" for some directions $u$, as shown on Figure \ref{shadowcone}.

We seek to define a pruning criterion that combines these two
facts. We consider two parameters,  the first being $\alpha$,
$\alpha \in [0,0.5]$,  the maximal trimming rate, the second being,
$\textit{maxiter}$, the maximal number of random directions that we
 generate before deciding that all the outliers have been removed.

\begin{figure}[h]
\begin{center}
\includegraphics[width=6cm]{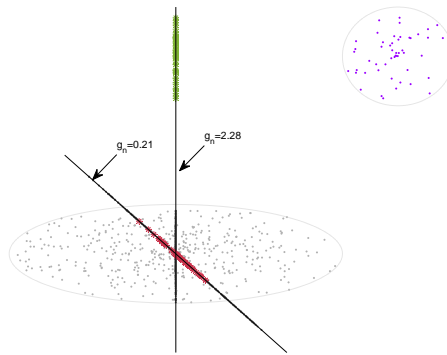}
\end{center}
\vspace{-15pt} \caption{The projections on the direction of the y
axis are useful to detect the outliers, the projections on the
direction of $y=-x$ are on the shadow cone.} \label{shadowcone}
\end{figure}

\bf The subsample selection algorithm. \rm

The following procedure should be repeated either $\lfloor n
\alpha \rfloor $ times (where $\lfloor k\rfloor$ represents the
largest integer smaller than or equal to $k$) or until the number of
random directions generated, $\textit{numdir}$, reaches
$\textit{maxiter}$, i.e. while $\textit{numdir} \leq
\textit{maxiter}$.

\begin{itemize}
\item Set $\textit{numdir}=1$.
\item Generate a direction $h$ at random according to a non-degenerate gaussian measure $\nu$ .
\item Project the data on that direction, $Y_{i}^{h}=<X_i,h>$.
\item Sort the projected observations in ascending order $Y_{(1)}^{h}  \leq \dots \leq Y_{(n)}^{h}$.
\item Consider the maximum gap $g_n=\max_{i=1,\dots,n-1}{|Y_{(i+1)}^{h}-Y_{(i)}^{h}|} $.
\item If $g_n\geq c_d$, $\dag$ trim the observation whose one--dimensional projection is most distant from the median of the projected observations,
  i.e. $X_i$ such that $i= arg \max \{|Y^{h}_{i}-\widetilde{Y^h}|$ for $i=1,\dots,n\}$  where $\widetilde{Y^h}=median\left( Y_{\left( 1\right) }^{h},\ldots
,Y_{\left( n\right) }^{h}\right) $.
\item Else if $g_n < c_d$, set $\textit{numdir}=\textit{numdir}+1$.
\end{itemize}

We denote to be $\gamma$ the effective trimming rate, i.e., the actual rate of observations that have been reduced.

The value of the threshold $c_d$ on $\dag$ should be related to the limiting behavior of the largest spacing.
Deheuvels (1984), characterized it in terms of the local behavior of the density of the observations
(in our case the one dimensional projections) in the neighborhood of the point at which
it reaches its minimum value. According to Theorem 4, he proved that if the distribution has compact support and
the density is bounded away from zero, i.e. there exits $x_0$ such that $f(x)>f(x_0)>0$ for
every $x_0 \in Support(f)$, and if $f$ has first derivative then, $
\liminf_{n\rightarrow\infty}\frac{ng_n f(x_0) -\log(n)}{\log(\log(n))}=-1.
$

This result suggest to take (for sufficiently large values of $n$)
$
%\begin{equation}
c_d=k\frac{\log(n)-\log(\log(n))}{nf(x_0)},
%\label{deheuvels}
%\end{equation}
$
where $k$ is a suitable constant greater than 1.
There are many univariate distributions that do not have compact
support, among which the normal distribution is probably the best known. In practice all the
observable data sets
 have bounded support, hence our consideration of this
expression might enable us to determine a useful threshold.

\noindent { \bf Remark 1.} If $\mathbb{H}$ is a finite dimensional
space the random directions can be generated according to an
uniform distribution on the unit sphere.

Finally, we compute the ordinary (non--robust estimate) of our target
parameters. For instance, the ML mean and covariance estimates
are computed only with those observations that have not been trimmed.
 This implies, that if we assign weight $w(X_i)=w_i=0$ to those observations
 that have been pruned and $w(X_i)=w_i=1$ to the remainder of the observations,
  then the mean and the covariance function estimates
 are given by

\begin{equation*}
\widehat{\mu }\left( t\right)
=\frac{\sum_{i=1}^{n}w_{i}X_{i}\left( t\right)
}{\sum_{i=1}^{n}w_{i}},
% \label{media}
\end{equation*}

\begin{equation*}
\widehat{\Sigma }\left( s,t\right) =\frac{\sum_{i=1}^{n}w_{i}\left(
X_{i}\left( s\right) -\widehat{\mu }\right) \left( X_{i}\left( t\right) -%
\widehat{\mu }\right) }{\sum_{i=1}^{n}w_{i}},
%\label{covarianza}
\end{equation*}

respectively, where $\sum_{i=1}^{n}w_{i}=n-\lfloor \gamma n
\rfloor$.

\bigskip

\textbf{Equivariance properties}

\bigskip
The estimates are equivariant under translations, re--scaling and
orthogonal transformations. Indeed, it follows from the fact that
the weights assigned to the observations are not affected by these
operations, because we know that the usual mean and covariance
estimates are not affected by them.

It is clear that a shift and re--scaling with a positive
constant will not affect the order statistics of the
 projections, and so the trimmed observations
 will be the same.
 Also, if \emph{T} is an orthogonal transformation, then $\left\langle X_{i},h\right\rangle =\left\langle \emph{T}X_{i},\emph{T}%
h\right\rangle $ (because it is an isometry), which means that as
the projections do not change, so neither do the weights.

\section{Some comments on the asymptotic behavior}
 Let $X_1,\ldots, X_n$ be iid random elements with the same distribution as
$X$ defined on $(\Omega, \mathcal A)$, and $h_1, \ldots, h_k$ iid
random directions on the unit sphere, independent of $X_1, \ldots,
X_n$ and defined on $(\Omega', \mathcal A')$. After performing the
algorithm we obtain a subset of random elements $X_{i_1}, \ldots,
X_{i_m}$, where $m \leq n$ is random, and each
$X_{i_j}(\omega,\omega'), j=1, \ldots, m$ depends on the set of
random elements $\{X_1, \ldots, X_n, h_1, \ldots, h_k\}$.

The distribution of $X$ is $P = (1-\epsilon) P_0 + \epsilon Q$,
where $P_0$ is the core distribution, $Q$ is an arbitrary
contamination and $\epsilon \in [0,0.5)$.

Let $P_m(\omega,\omega')$ be the empirical measure associated with
the set of random elements $\{X_{i_1}, \ldots, X_{i_m}\}.$

We wish to identify the conditions under which $P_m$
 converges weakly towards $P_0$, which would imply that
for any continuous functional $T(P)$, the estimate defined as
$T(P_m)$ will converge  to $T(P_0)$, i.e., we obtain almost sure
consistency without assuming any kind of symmetry.

Let $X_0$ be a random element with distribution $P_0$. As
mentioned above we assume that $\mathbb E(\Vert X_0\Vert^2)$
is finite and the covariance operator $\Sigma_{X_0}$, is positive
definite when $\mathbb{H}$ is finite dimensional and compact and
self-adjoint if $\mathbb{H}$ is infinite dimensional. Without any loss
of generality we also assume that $\mathbb E (X_0) = 0$.

We now introduce  the following hypotheses:

\bf A1. \rm $P = (1-\epsilon)P_0 + \epsilon Q$, for some $0 \leq
\epsilon < 0.5$.

\bf A2.- \rm $X_0$ has a bounded support $S_0$, and $Q$ is such
that its support $S_Q$ verifies that $d(S_0, S_Q) \geq \delta
>0$.

This means that the outliers are far from the bulk of the data.

\bf A3.-\rm  For all $h, \Vert h \Vert = 1$, $X_0(h):= <X_0, h>$
has a derivable density $f_h$ that fulfils

\begin{equation}
\label{A3}
 \inf_{t \in S_h} \vert f_h(t) \vert  > \eta,
\end{equation}
for some $\eta > 0$, where $S_h$ represents for the support of $f_h$.

Let $\{X_{j_1}, \ldots, X_{j_r}\}$ be the subset of $\{X_1, \ldots,
X_n\}$ with $P_0$, $P_r$ being the corresponding empirical
distribution, and  $I_n = \{j_1, \ldots, j_r\}$.

Because of \bf A1, \rm we may assume that
$
X_i = Z_i X_{i0} + (1-Z_i) X_{iQ}, \ \ i=1, \ldots, n,
$
where $X_{10}, \ldots, X_{n0}$ are iid random elements with
distribution $P_0$, $X_{1Q}, \ldots, X_{nQ}$ are iid random
elements with distribution $Q$ and $Z_1, \ldots, Z_n$ are iid
random variables with distribution Bernoulli($\epsilon$), being
independent the three sets of random elements.

The aim is to show that $d_{\Pi} (P_m, P_0) \to 0$ a.s., where
$d_{\Pi}$ stands for the Prohorov distance when $n \to \infty$,
and $k=k(n) \to \infty$ at an appropriate rate. This would imply
that for any continuous functional $T(P)$, the estimate defined as
$T(P_m)$ would converge  to $T(P_0)$. Moreover, we show that in the
finite dimensional case, the final estimates will attain
``full--efficiency", meaning that the output of the algorithm will
be  the set of random elements with distribution $P_0$. In the
infinite dimensional case, an additional assumption will be
required in order to attain the ``full-efficiency". Otherwise,
under some symmetry conditions on the distribution $P_0$,
regardless of the distribution $Q$, the estimate will still be
consistent.

A sketch of the proof may be made on the following lines:

%\begin{itemize}

 First we prove that since $\epsilon < 0.5$, for a sufficiently large $n$ we
 have that $r > n-r$ a.s.

Since $r = \sum_{i=1}^n Z_i$, this is equivalent to show that
$\sum_{i=1}^n Z_i > n - \sum_{i=1}^n Z_i$ a.s. for sufficiently large $n$,
 which follows because by the Hoeffding inequality we have that
$
P(\sum_{i=1}^n Z_i > \frac{n}{2}) \leq e^{-2n(0.5 -\epsilon)^2},
$
and $\epsilon < 0.5$.

Second, we show that given $X_i$ with distribution $P_0$ and any $X_j$ with distribution $Q$,
 for $\omega$ in a subset of probability $1$ on
$\Omega$, there exists $\delta_1(\omega)
> 0$ and a set of strictly positive probability of directions on $\Omega'$ for
which

\begin{equation}
\label{lejos}
 \vert <X_i, h> - <X_j, h> \vert > \delta_1(\omega).
\end{equation}

Indeed, by assumption \bf A2 \rm , we have that $< X_i - X_j, X_i
- X_j >
> \delta$ for all $j$ such that $X_j$ has distribution $Q$, with
probability $1$. There is therefore a cone of directions $h$ where
(\ref{lejos}) holds.

This implies that in the presence of outliers, for sufficiently large values of $k$
we can find as many directions as are required for which the
maximal spacing of the projected data is be larger than
$\delta_1$.

%%%%%%%%%%%%%%%%%%%%%%%%%%%%%%%%%%%%%55
Recall that the value of the threshold $c_d$ on $\dag$ is
related to the limiting behavior of the largest spacing. Deheuvels
(1984) characterized it in terms of the local behavior of the
density of the observations (in our case the one dimensional
projections) in a neighborhood of the point where it attains its
minimum spacing. On Theorem 4, he proves that if the distribution has
compact support and the density is bounded away from zero, i.e.
there exits $x_0$ such that $f(x)\geq f(x_0)>0$ for every $x \in
Support(f)$, and if $f$ has first derivative then
$$
-1= \liminf_{n\rightarrow\infty}\frac{ng_n f(x_0)
-\log(n)}{\log(\log(n))} < \limsup_{n\rightarrow\infty}\frac{ng_n
f(x_0) -\log(n)}{\log(\log(n))}= 1, \ \ \mbox{a.s.},
$$

thus, the maximal spacing $g_n$ behaves like
$\frac{\log(n)\pm\log(\log(n))}{nf(x_0)}$, which converges to zero
sufficiently quickly.

Our threshold, which is a constant, is defined as,
$
%\begin{equation}
c_d=k\frac{\log(n)-\log(\log(n))}{nf(x_0)},
% \label{deheuvels}
%\end{equation}
$
where $k$ is a suitable constant greater than $1$.

Since $r > n-r$, the median of the projected data must lie
between $\min_{i \in I_n} <X_i, h>$ and $\max_{i \in I_n} <X_i,
h>$.

The foregoing statements and assumption \bf A3 \rm  imply that for
directions $h$ fulfilling (\ref{lejos}) the maximal spacing will
be achieved between some $<X_i, h>$ and $<X_j, h>$ for $X_i$ with
distribution $P_0$ and $X_j$ with distribution $Q$ or for both
$X_i, X_j$ with distribution $Q$. In both cases an observation
with distribution $Q$ will be deleted. Therefore, for sufficiently large values of
 $k$, all outliers will be deleted.

%%%%%%%%%%%%%%%%%%%%%%%%%%%%%%%%%%%%%

 The foregoing proofs are insufficient, however. We already know from the previous
assertion that for large enough values of $k$ the set
$\{X_{i_1}, \ldots, X_{i_m}\} \subset \{X_{j_1}, \ldots, X_{j_r}\},$
but we must still prove that the set $X_{i_1}, \ldots, X_{i_m}$
``behaves as an iid sample of $P_0$", and that $m \to \infty$. To
achieve this goal we need conditions more stringent than (\ref{A3}) to
hold (see, (\ref{masfuerte}) below).

For each fixed direction $h_1$, let $g_r(h_1)$ denote the
maximal spacing of the set $\{ <X_j, h_1>: j \in I_n \}$, with
cardinal $r$. From assumption \bf A3, \rm and the fact that $c_d
> \frac{log (r) + log(log (r))}{r f(x_0)}$ we have that $
P( g_r(h_1) > c_d \ \ \mbox{i.o.} ) = 0,$
and thus, $g_r(h_1) < c_d$ for $r \geq r_1(\omega)$ for almost all
$\omega \in \Omega$.

If that holds, then
\begin{equation}
\label{masfuerte}
 P( \max_{1 \leq l \leq k} g_r(h_l) > c_d  \ \
\mbox{i.o.} ) = 0,
\end{equation}
for a sequence $k=k(n) \to \infty$, the procedure will end with
exactly all observations on the set $\{X_{j_1}, \ldots,
X_{j_r}\}$, and the estimate will attain full efficiency.

In the finite dimensional case, (\ref{masfuerte}) is a consequence
of assumption \bf A3 \rm (see Proposition \ref{coupling} below),
but this is not the case in the infinite dimensional setting.

Otherwise, we would need to require some symmetry assumptions of the pair $(T, P_0)$.
 More precisely, we would require that
\begin{itemize}
\item $X_0$ has a symmetric distribution around $\mu := \mathbb
E(X_0)$, i.e. $X_0 - \mu$ and $\mu-X_0$ have the same distribution,
and that $T(P_0) = T(\tau_K(P_0))$, where $\tau_K(P_0)$ is the
distribution of a truncation of the random variable $X_0$, i.e.
$\tau_K(X_0) = X_0 \mathbb I_{B(\mu,K)}$.
\item The density $f_h(t)$ is a decreasing function around $<\mu, h>$ for all
$h$,
\end{itemize}
to obtain consistency. Indeed, it is be a consequence of the
symmetry and Theorem 4 in Deheuvels (1984), where it is shown that
for any $\delta > 0$, there exists almost surely a value of $N$ such that,
for any $n \geq N$, the maximal spacing interval is included in
$(x_0 - \delta, x_0 + \delta)$.

%\end{itemize}

\begin{proposition}
\label{coupling} Assume that $S_0 \subset \mathbb R^d$ the support
of $P_0$ is a compact connected set with Lebesgue boundary set
null, i.e. $\vert\partial S_0 \vert = 0$, and that there exists
$\eta > 0$ such that the density of $X_0$ fulfills $f > \eta$ on
$S_0$. Given a sample $\{X_1, \ldots, X_n\}$ of iid random vectors
with distribution $P_0$, let
$$
\Delta_n(P_0) = \sup_{r \in \mathbb R} \left\{ \exists \ x \in
\mathbb R^d \ \ \mbox{with}  \ \ x + r B(0,1) \subset S_0
\setminus \{X_1, \ldots, X_n\}\right\},
$$
be the maximal multivariate spacing. We then have that
\begin{itemize}
\item [i)] There exists a compact set $ S_1$, $S_0 \subset
S_1 \subset \mathbb R^d$, $\vert S_1 \vert = 1/\eta$, $\vert
\partial S_1 \vert =0$ such that
$$
P(\Delta_n(P_0) > c) \leq P(\Delta_n(P_1) > c), \ \ \forall c >0,
$$
where $\Delta_n(P_1)$ is the maximal spacing corresponding to a
uniform random sample $U_1, \ldots, U_n$ on $S_1$.
\item [ii)]
$
P( \max_{1 \leq l \leq k} g_r(h_l) > c) \leq P(\Delta_n(P_1) > c),
$
for all $c > 0$.
\item [iii)]
$$
n V_n - log n - (d-1) loglog n - log \alpha  \
{\rightarrow}^\omega \ U,
$$
where $V_n = \Delta_n(P_1)^d  \ \frac{\vert B(0,1)\vert}{\vert S_1
\vert}$, $U$ has the extreme value distribution $P(U \leq u) =
e^{-e^{-u}}$ and $\alpha$ and is a known constant. Moreover,
$$
d-1 = \liminf_{n \to \infty} \frac{ n V_n - log n}{loglog n} \leq
\limsup_{n \to \infty} \frac{ n V_n - log n}{loglog n} = d-1, \ \
\mbox{a.s.}
$$

\end{itemize}
\end{proposition}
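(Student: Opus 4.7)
My plan is to prove (i) via a coupling between a $P_0$-sample and a uniform sample on an enlarged set, to deduce (ii) by reducing one-dimensional random-projection gaps to $d$-dimensional empty balls and then invoking (i), and to obtain (iii) by quoting the classical extreme-value asymptotics for the maximal spacing of a uniform sample.

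For (i), the key observation is that $1 = \int_{S_0} f \ge \eta \vert S_0 \vert$, so $\vert S_0 \vert \le 1/\eta$, and one can choose a compact $S_1 \supset S_0$ with $\vert S_1 \vert = 1/\eta$ and $\vert \partial S_1 \vert = 0$ (for instance a tubular neighborhood of $S_0$ trimmed by a smooth hypersurface to the required volume). Both $f$ and the uniform density $\eta \mathbf{1}_{S_1}$ share the common component $\eta \mathbf{1}_{S_0}$, with residuals integrating to the same mass $1-\eta\vert S_0\vert$. I would then build the coupling $(X_i,U_i)$ independently across $i$ as follows: flip a Bernoulli$(\eta\vert S_0\vert)$ coin; on success set $X_i=U_i$ drawn uniformly on $S_0$; on failure draw $X_i$ from $(f-\eta\mathbf{1}_{S_0})/(1-\eta\vert S_0\vert)$ and, independently, $U_i$ uniformly on $S_1\setminus S_0$. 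Then $X_i\sim P_0$ and $U_i\sim P_1$ marginally, and every $U_i$ that falls in $S_0$ coincides with the matching $X_i$. Hence any open ball $B\subset S_0$ that is empty of the $X$-sample is also empty of the $U$-sample, yielding $\Delta_n(P_0)\le\Delta_n(P_1)$ pathwise and the stochastic dominance.

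For (ii), suppose $\max_{1\le l\le k} g_r(h_l) > c$. Then for some direction $h_l$ two consecutive projected order statistics $Y_{(i)}^{h_l} < Y_{(i+1)}^{h_l}$ have gap greater than $c$, witnessing an empty slab of thickness greater than $c$ whose trace on $h_l$ lies strictly between the minimum and maximum of the projected subsample. Intersecting with the convex body $S_0$ and using that $\vert\partial S_0\vert=0$ and that the slab sits inside the convex hull of the projected subsample, I would extract an empty open Euclidean ball of radius greater than $c$ contained in $S_0\setminus\{X_{j_1},\ldots,X_{j_r}\}$. This gives $\Delta_r(P_0)>c$, and applying (i) to the iid $P_0$-subsample $\{X_{j_1},\ldots,X_{j_r}\}$ then furnishes the tail bound against a uniform sample on $S_1$; the sample-size bookkeeping between $r$ and $n$ is handled by a further monotone coupling (adding independent $P_1$-points only shrinks empty balls).

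Part (iii) is the classical Poissonization-based Gumbel limit for the largest empty ball in a uniform sample on a compact set with Lebesgue-null boundary (Deheuvels, Janson and later authors); the $\liminf$ and $\limsup$ identities follow by Borel--Cantelli applied to the explicit exponential tails of $nV_n-\log n$. The main obstacle is the geometric step in (ii): converting a one-dimensional gap in a random direction into a $d$-dimensional empty ball of the same radius inside $S_0$ without losing the matching constant. This demands quantitative control, uniform in $h_l$, of how the empty slab interacts with $\partial S_0$, and is precisely the step that really uses the regularity assumptions on $S_0$.
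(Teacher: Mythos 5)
Your treatment of part i) is correct and genuinely different from the paper's. The paper couples through the quantile transform ($X_i=F^{-1}(U_i)$, $Y_i=G^{-1}(U_i)$, using that $(F^{-1})'\le 1/\eta=(G^{-1})'$), which is carried out only for $d=1$, with the case $d>1$ deferred to Ferrari et al.\ (2011). Your common-component coupling --- splitting both $f$ and the uniform density $\eta\mathbf{1}_{S_1}$ over the shared piece $\eta\mathbf{1}_{S_0}$, so that every $U_i$ landing in $S_0$ coincides with the matching $X_i$ --- yields the pathwise domination $\Delta_n(P_0)\le\Delta_n(P_1)$ in every dimension at once and is self-contained; the marginals check out and the construction of $S_1$ is a minor point. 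Part iii) is, in both your write-up and the paper, an appeal to Janson (1987), so there is nothing to compare there.

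Part ii) is where your proposal has a genuine gap, and you acknowledge it yourself. Three concrete problems. First, you treat $S_0$ as a convex body; the hypotheses give only compactness and connectedness, so the slab-intersection geometry you sketch is not available. Second, an empty slab of thickness $c$ can never contain a ball of radius greater than $c$ (at best $c/2$), and even that fails when the slab meets $S_0$ in a thin sliver near the boundary; so ``extract an empty ball of radius greater than $c$'' cannot be rescued by regularity of $S_0$ alone. Third, your $r$-versus-$n$ bookkeeping points the wrong way: adding points to a configuration shrinks the maximal empty ball, so $\Delta_n(P_1)\le\Delta_r(P_1)$ for $n\ge r$, which gives $P(\Delta_n(P_1)>c)\le P(\Delta_r(P_1)>c)$ rather than the inequality you need. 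For comparison, the paper disposes of this entire step in one line --- from $\vert\langle h,X_i\rangle-\langle h,X_j\rangle\vert\le\Vert X_i-X_j\Vert$ it asserts $\max_{l}g_n(h_l)\le\Delta_n(P_0)$ and then invokes i) --- without addressing the geometric issues you correctly identify; the obstacle you flag is real, but your proposal does not overcome it either.
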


\begin{proof} The proof of i) is based on a coupling argument. We
first demonstrate it for the simple case of $d=1$. It may be assumed
 without loss of generality that $S_0 =
[0,1]$, and $f(x)> \eta \ \ \forall \ x \in [0,1]$. Let $g$ be the
uniform distribution on $S_1 = [0, \frac{1}{\eta}]$, and $U_1,
\ldots, U_n$ iid random variables uniformly distributed on
$[0,1]$. Let $F$ and $G$ stand for the cumulative distribution
functions, and define $X_i = F^{-1}(U_i), \ \ Y_i = G^{-1}(U_i)$,
iid samples with distribution $F$ and $G$ respectively. Since for
any pair $i,j$, $F(X_i)=G(X_i), \ \ F(X_j)=G(X_j)$, and the
derivative of $F$ is uniformly greater than that of $G$, we have
that $\vert X_i -  X_j \vert < \vert Y_i - Y_j \vert,$
which concludes the proof.

%The argument for $d > 1$ is more involved. Given $S_0 \in
%\mathbb{R}^d$ let $S^*_0 = \{(x,z): x \in S_0 \ \ \mbox{and} \ 0
%\leq z \leq f(x)\}$. We first prove that there exist a compact set
%$S_1 \supset S_0$ and a continuous function $h: S^*_0 \rightarrow
%S^*_1$, such that $\Vert (x,z) - (y,z) \Vert < \Vert h(x,z) -
%h(y,z) \Vert \ \ \forall \ (x,z), (y,z) \in S^*_0$, where $S^*_1 =
%\{(x,z): x \in S_1, 0 \leq z  \leq \eta\}$, and $\vert S_1 \vert =
%\frac{1}{\eta}$. ..........

The argument for $d > 1$ is a little more involved and can be found in  Ferrari et al. (2011).

%Given that $S_0 \in
%\mathbb{R}^d$ let $S^*_0 = \{(x,z): x \in S_0 \ \ \mbox{and} \ 0
%\leq z \leq f(x)\}$ be the hypograph of $f$. We first  observe that
%there exists a compact set $S_1 \supset S_0$ and a measure that preserves the  homeomorphism
%$h: S^*_0 \rightarrow S^*_1$, such that if $h(x,z) = (y,w) \in
%S^*_1 \ \ \mbox{and} \ h(x',z')= (y',w')$, $\Vert x - x' \Vert <
%\Vert y - y' \Vert \ \ \forall \ (x,z), (x',z') \in S^*_0$, where
%$S^*_1 = \{(y,w): y \in S_1, 0 \leq w \leq \eta\}$, and $\vert S_1
%\vert = \frac{1}{\eta}$.

\

%Once this transformation is built up, a coupling argument will
%apply. Indeed, let $\{U_1, \ldots, U_n\}$ be iid uniform random
%vectors on the set $S^*_0$. If $U_i = (X_i, Z_i)$, then $X_1,
%\ldots, X_n$ are iid random vectors with distribution $P_0$. On
%the other hand, if $U'_i = h(U_i, Z_i)$, and $U'_i = (Y_i, Z'_i)$,
%$Y_1, \ldots, Y_n$ are iid random vectors uniformly distributed on
%$S_1$, and we have that $\Vert X_i - X_j \Vert < \Vert Y_i - Y_j
%\Vert$. Since all distances are larger, so does the maximal
%spacing.

ii) Let $X^{(1)}(x)$ be the nearest neighbor of $x$ among $X_1,
\ldots, X_n$. For any $h$, $\Vert h \Vert =1$, and for any pair $i,j$
we have that $ \vert <h,X_i> - < h, X_j> \vert \leq \Vert X_i -
X_j \Vert$. Therefore,
\begin{equation}
\label{desigualdad}
 \max_{1 \leq l \leq k} g_n(h_l) \leq \Delta_n(P_0),
\end{equation}
and the conclusion follows from (\ref{desigualdad}) and  i).

 iii) These results were obtained by Janson (1987, Theorem1).

\end{proof}

\section{Simulations}
In this Section we discuss the performance of the location,
correlation estimates proposed herein, as well as robust
principal components estimates. In Subsection \ref{finito} we
analyze the finite dimensional case and the infinite dimensional
case is considered in Subsection \ref{infinito}.  In both cases we
compare the RT--estimate with other classical robust estimates and
with a benchmark, which is defined below.
\bigskip
\subsection{Finite Dimensional Case}
\label{finito}

 We report on the results of a Monte Carlo study, the aim of which was
to compare the performance of the Random Trimming (RT) estimates
for location and the correlation matrix. It is well known, that
the distributions of the original data set $P$ and the trimmed
data set $\widetilde{P}$ are different, and if $P$ is elliptical the
center $\mathbf{\mu}$, coincides with the center
$\widetilde{\mathbf{\mu}}$ of $\widetilde{P}$. However, the
covariance estimate $\Sigma$ of $P$ and the covariance estimate
$\widetilde{\Sigma}$ of $\widetilde{P}$ have the same shape but
differ by a constant $c$. In order to avoid this problem, we
estimated the correlation matrix for every case that did not
depend on constants. Another approach could have been the computation the
minimum and maximum eigenvalues or the matrix condition number. We herein
compare our estimates with two different estimates and a
benchmark, which we describe below.

a) The Minimum Volume Ellipsoid (MVE) estimate. The MVE estimate
was introduced by Rousseeuw (1987); among all the ellipsoids  $\{x
: d(x,\mu, \Sigma)<c\}$ that contain at least half the data
points, the one given by the MVE estimate has the minimum volume.
Since the computation of this estimate numerically is very expensive unless
$p$ and $n$ are small, a sub--sampling procedure as
described by Maronna et al. (2006) is carried out, we denote these
estimates (MVEF). We consider these estimates because, despite
being very inefficient, they achieve the highest possible
breakdown point, if the points are in a general position (this means
that no hyperplane contains more than $p$ points). In addition,
these estimates are the initial estimates usually considered for the
computation of S-estimates, and a number of statistical programs,
including S-plus, R and SAS, have internal routines to compute them.
In these cases the covariance matrix estimation also differs from
that of the population by constant, then we compute the correlation
matrix.

b) As a benchmark, we compute the mean and correlation classical
estimates from the data sets without outliers, and we denote these
\emph{trick} estimates. It is clear that this benchmark attains
the best possible behavior.

c) Recently, Gervini (2010) introduced a trimmed estimate for
location and scatter, mainly with a view to handling functional data
but it was also defined for general Hilbert spaces. Again the idea was
to trim the observations that are far away from the bulk of the
data set. For each observation $X_i$ he defines the $\alpha$--
radius, $r_i$, $\alpha  \in [0,0.5] $, to be the radius of the
smallest ball centered on $X_i$ that contains $100 \alpha \%$ of
the observations. In the second stage he trimmed the $100\beta \%$, $\beta \in
[0,0.5] $, of the observations with higher $r_i$. We
denote these Inter--distances Trimmed Estimates (IT).

Figure \ref{corona} shows an example where the performances of the  RT and  IT estimates are very different.
The data are displayed in Figure \ref{corona} (a); the grey points correspond to the central distribution and the
green ones are the outliers. The contamination rate is $20\%$. In Figures \ref{corona} (b) and (d) the
red dots correspond to the observations trimmed by considering the IT procedure where
the trimming rates were $20$ and $40\%$ respectively. In the first case it was only the outliers that were trimmed,
while in the second case some observations from the core distribution were also pruned. In Figures \ref{corona} (c) and (d) the
pink dots correspond to the observations trimmed by the RT procedure for trimming rates of $20$ and $40\%$
respectively. In both cases only the outliers were trimmed.

 \begin{figure}[h]
\begin{center}
\includegraphics[width=6cm]{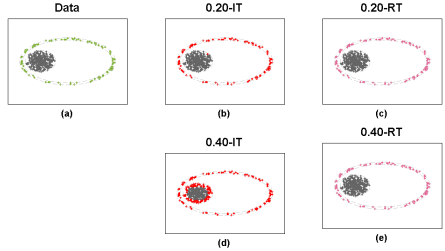}
\end{center}
\vspace{-15pt}\caption{Scatter--plot of the data with the outliers in green (a),  trimmed observations by IT $20\%$, RT $20\%$, IT $40\%$ and RT $40\%$ ((b),(c),(d) and (e)) are shown in a different color.} \label{corona}
\end{figure}

We consider samples of size $n = 100$ on dimensions $p = 10$, $20$
and $40$. In all cases a fraction $(1-\epsilon)$ of the
observations $\mathbf{x}_i$ were taken from a multivariate normal
distribution, and the remaining observations are point wise
outliers $\bf{x_0}$ .  Because of the equivariance properties of
all the estimates considered in this study, the normal
observations were taken with a mean of zero and the identity covariance
matrix. Moreover, the values of $\bf{x_0}$ were taken to have the form
$\bf{x_0}$$=\left(x_0,0,\dots,0\right)$ and $x_0$ took values of $3,
7$ and $11$. For the case of the RT-estimate, a direction was
considered to be suitable for trimming if the length of the one
dimensional gap is bigger than a constant as established on Deheuvels
(1984) we took $k=3$ and $f(x_0)=0.0044$, which corresponds to
$\varphi(3)$, where $\varphi$ is the density of a standard normal
distribution. The search is carried out in less than $100$ random
directions. In the case of the IT-estimates  we consider
$\alpha=0.5$, as suggested by Gervini (2010). In both cases the trimming
proportions were $10$, $30$ and $50\%$ of the data, for the
RT--estimates this was the maximum allowable trimming and for the
IT--estimates this is the effective pruning rate.

First, we compared the location estimates, and the results are given on
Tables \ref{centro10}  and  \ref{centro20}. In every case the RT--estimates
 outperform MVE and MVEF. The behavior of the
IT--estimate is remarkable, specially if the outliers are far away
from the bulk of the data. If $10\%$ of the observations are
outliers and the same amount of data are pruned there are several
configurations where the MSE of the trick estimate is achieved. The
RT--estimates also perform very well, in that they are better than the
IT-estimate if $x_0=3$, but not if $x_0=7$ or $11$. It is
important to note that the MSEs for the RT--estimates do not change
with the trimming rate, due to the application of a wise trimming
strategy.

%\bigskip

\begin{table}[tbp]
\centering
\begin{tabular}{lllllllllll}
\hline
Epsilon=0.1 &  &  &  &  &  &  &  &  &  &  \\
&  & Trick & MVE & MVEF & \multicolumn{3}{l}{IT} & \multicolumn{3}{l}{RT}
\\
p & $\bf{x_0}$ &  &  &  & 0.5 & 0.7 & 0.9 & 0.5 & 0.7 & 0.9 \\ \hline
10 & 3 & 0.10 & 0.16 & 0.18 & 0.25 & 0.19 & 0.15 & 0.13 & 0.13 & 0.13 \\
& 7 & 0.10 & 0.28 & 0.19 & 0.14 & 0.12 & 0.10 & 0.23 & 0.23 & 0.23 \\
& 11 & 0.10 & 0.36 & 0.15 & 0.14 & 0.12 & 0.10 & 0.14 & 0.14 & 0.15 \\ \hline
20 & 3 & 0.10 & 0.15 & 0.17 & 0.18 & 0.15 & 0.13 & 0.11 & 0.11 & 0.12 \\
& 7 & 0.10 & 0.24 & 0.30 & 0.14 & 0.16 & 0.20 & 0.18 & 0.18 & 0.18 \\
& 11 & 0.10 & 0.35 & 0.42 & 0.14 & 0.12 & 0.10 & 0.26 & 0.26 & 0.26 \\ \hline
40 & 3 & $0.10$ & $0.13$ & $0.15$ & $0.16$ & $0.13$ & $0.11$ & $0.11$ & $0.10
$ & $0.10$ \\
& 7 & $0.10$ & $0.19$ & $0.22$ & $0.26$ & $0.20$ & $0.16$ & $0.15$ & $0.15$
& $0.15$ \\
& 11 & $0.10$ & $0.27$ & $0.33$ & $0.14$ & $0.12$ & $0.13$ &
$0.20$ & $0.20$ & $0.20$ \\ \hline
\end{tabular}%
\caption{MSE for the location estimates for $10\%$ contamination}
\label{centro10}
\end{table}

\begin{table}[tbp]
\centering
\begin{tabular}{lllllllllll}
\hline
Epsilon=0.2 &  &  &  &  &  &  &  &  &  &  \\
&  & Trick & MVE & MVEF & \multicolumn{3}{l}{IT} & \multicolumn{3}{l}{RT}
\\
p & $\bf{x_0}$ &  &  &  & 0.5 & 0.7 & 0.9 & 0.5 & 0.7 & 0.9 \\ \hline
10 & 3 & $0.11$ & $0.27$ & $0.23$ & $0.43$ & $0.32$ & $0.24$ & $0.21$ & $0.21
$ & $0.21$ \\
& 7 & $0.11$ & $0.57$ & $0.60$ & $0.14$ & $0.12$ & $0.31$ & $0.45$ & $0.45$
& $0.45$ \\
& 11 & $0.11$ & $0.88$ & $0.72$ & $0.14$ & $0.12$ & $0.40$ & $0.27$ & $0.27$
& $0.44$ \\ \hline
20 & 3 & $0.11$ & $0.23$ & $0.29$ & $0.29$ & $0.22$ & $0.18$ & $0.16$ & $0.16
$ & $0.16$ \\
& 7 & $0.11$ & $0.48$ & $0.63$ & $0.31$ & $0.45$ & $0.37$ & $0.32$ & $0.32$
& $0.32$ \\
& 11 & $0.11$ & $0.74$ & $0.99$ & $0.14$ & $0.12$ & $0.29$ & $0.49$ & $0.50$
& $0.49$ \\ \hline
40 & 3 & $0.11$ & $0.17$ & $0.21$ & $0.22$ & $0.17$ & $0.14$ & $0.13$ & $0.13
$ & $0.13$ \\
& 7 & $0.11$ & $0.33$ & $0.44$ & $0.47$ & $0.34$ & $0.27$ & $0.24$ & $0.24$
& $0.24$ \\
& 11 & $0.11$ & $0.51$ & $0.67$ & $0.14$ & $0.13$ & $0.33$ & $0.36$ & $0.36$
& $0.36$ \\ \hline
\end{tabular}%
\caption{MSE for the location estimates for $20\%$ contamination}
\label{centro20}
\end{table}

Tables \ref{correlac10} and \ref{correlac20} show the MSEs for the estimated correlation
matrices. In this case the behavior of the RT--estimates is noteworthily,
 achieving the MSE of the \emph{trick} estimate, and in almost every circumstance, this method
  outperforms the others.

%\begin{table}[tbp]
\begin{table}[h]
\centering
\begin{tabular}{lllllllllll}
\hline
Epsilon=0.1 &  &  &  &  &  &  &  &  &  &  \\
&  & Trick & MVE & MVEF & \multicolumn{3}{l}{IT} & \multicolumn{3}{l}{RT}
\\
p & $\bf{x_0}$ &  &  &  & 0.5 & 0.7 & 0.9 & 0.5 & 0.7 & 0.9 \\ \hline
10 & 3 & $0.10$ & $0.12$ & $0.18$ & $0.10$ & $0.11$ & $0.14$ & $0.10$ & $0.10
$ & $0.10$ \\
& 7 & $0.10$ & $0.12$ & $0.17$ & $0.13$ & $0.11$ & $0.10$ & $0.09$ & $0.09$
& $0.09$ \\
& 11 & $0.10$ & $0.12$ & $0.17$ & $0.13$ & $0.11$ & $0.10$ & $0.10$ & $0.10$
& $0.10$ \\ \hline
20 & 3 & $0.10$ &  $0.13$ & $0.16$ & $0.14$ & $0.12$ & $0.11$ & $0.10$ & $0.10$ & $%
0.10$ \\
& 7 & $0.10$ & $0.13$ & $0.16$ & $0.15$ & $0.12$ & $0.10$ & $0.10$ & $0.10$ & $0.10
$ \\
& 11 & $0.10$ & $0.13$ & $0.16$ & $0.14$ & $0.11$ & $0.10$ & $0.10$ & $0.10$ & $%
0.10$ \\ \hline
40 & 3 & $0.10$ & $0.13$ & $0.15$ & $0.15$ & $0.13$ & $0.11$ & $0.10$ & $0.10$ & $%
0.10$ \\
& 7 & $0.10$ &  $0.13$ & $0.15$ & $0.15$ & $0.13$ & $0.11$ & $0.10$ & $0.10$ & $0.10
$ \\
& 11 & $0.10$ &  $0.13$ & $0.15$ & $0.14$ & $0.12$ & $0.10$ & $0.10$ & $0.10$ & $%
0.10$ \\ \hline
\end{tabular}%
\caption{MSE for the correlation matrix estimates for $10\%$ contamination}
\label{correlac10}
\end{table}

%\begin{table}[tbp]
\begin{table}[h]
\centering
\begin{tabular}{lllllllllll}
\hline
Epsilon=0.2 &  &  &  &  &  &  &  &  &  &  \\
&  & Trick & MVE & MVEF & \multicolumn{3}{l}{IT} & \multicolumn{3}{l}{RT}
\\
p & $\bf{x_0}$ &  &  &  & 0.5 & 0.7 & 0.9 & 0.5 & 0.7 & 0.9 \\ \hline
10 & 3 & $0.11$ & $0.14$ & $0.18$ & $0.16$ & $0.12$ & $0.10$ & $0.10$ & $0.10
$ & $0.10$ \\
& 7 & $0.11$ & $0.14$ & $0.19$ & $0.13$ & $0.11$ & $0.10$ & $0.10$ & $0.10$
& $0.10$ \\
& 11 & $0.11$ &  $0.14$ & $0.19$ & $0.13$ & $0.11$ & $0.10$ & $0.10$ & $0.10$ & $%
0.10$ \\ \hline
20 & 3 & $0.11$ &   $0.15$ & $0.19$ & $0.17$ & $0.13$ & $0.11$ & $0.11$ & $0.11$ & $%
0.11$ \\
& 7 & $0.11$ &  $0.15$ & $0.19$ & $0.14$ & $0.13$ & $0.11$ & $0.11$ & $0.11$ & $0.11
$ \\
& 11 & $0.11$ &  $0.15$ & $0.19$ & $0.14$ & $0.12$ & $0.10$ & $0.11$ & $0.11$ & $%
0.10$ \\ \hline
40 & 3 & $0.11$ &  $0.14$ & $0.18$ & $0.18$ & $0.14$ & $0.12$ & $0.11$ & $0.11$ & $%
0.11$ \\
& 7 & $0.11$ &  $0.14$ & $0.18$ & $0.18$ & $0.14$ & $0.12$ & $0.11$ & $0.11$ & $0.11
$ \\
& 11 & $0.11$ &  $0.14$ &  $0.18$ & $0.14$ & $0.12$ & $0.11$ & $0.11$ & $0.11$ & $%
0.11$ \\ \hline
\end{tabular}%
\caption{MSE for the correlation matrix estimates for $20\%$ contamination}
\label{correlac20}
\end{table}
Finally, we briefly discuss the computation time. It is well
known that MVE is very slow for high dimensional data sets.  In our
simulations, the MVE is on average between 1300 and 2000 times slower that
the $\emph{trick}$ estimate. The MVEF method takes on average half
the time that MVE takes to compute the estimates; nevertheless it is a very
slow procedure. Computation times for the IT and RT estimates are very close
to the benchmark, particularly for higher dimension. For $p=40$ the two methods are
 only $30\%$ (respectively $25\%$) slower than the
trick estimate on average. For dimension $20$  they take less than twice the
time of the $\emph{trick}$ estimate to compute the estimations and
for dimension $10$ they are four times slower. In all cases the
RT--estimates are slightly faster than the IT--estimates.
\bigskip

\subsection{Functional Data}
\label{infinito} In this Section we study the performance of the
RT-estimates of the  location  and correlation functions for the
case of functional data in the presence of outliers. We generate
100 functions, 90 of which follow the central model, $
X\left( t\right) =30t(1-t)^{3/2}+\varepsilon \left( t\right)$, for $t\in \left[ 0,1\right]$.
and the remaining 10 observations are distributed under one of the
following contamination models

\textbf{Case A.}
$X\left( t\right) =30\left( 1-t\right) t^{3/2}+\varepsilon \left( t\right)$, for $ t\in \left[ 0,1\right].$ The outliers have a different shape from the functions generated by the core distribution.

\textbf{Case B.} $X\left( t\right) =30t(1-t)^{3/2}+2+\varepsilon \left( t\right)$, for $ t\in \left[ 0,1\right].$
 In these case we consider level shift outliers.

\textbf{Case C.}
\begin{equation*}
X\left( t\right) =\left\{
\begin{tabular}{lll}
$30t(1-t)^{3/2}+\varepsilon \left( t\right) $ &  & $t\in \left[ 0,0.4\right)
\cup \left( 0.6,1\right] $ \\
$30t(1-t)^{3/2}+2+\varepsilon \left( t\right) $ &  & $t\in \left[ 0.4,0.6%
\right] $%
\end{tabular}%
\right.
\end{equation*}

In the last case the outliers have the same distribution as the
observations generated under the central model, except in the
interval $[0.4,0.6]$ where they are shifted as in the previous
case. In every case the error of the process $\varepsilon \left(
t\right)$ follows an Ornstein-Uhlenbeck law, i.e., a Gaussian
process with a mean of zero and a covariance function given by,
$
\rho \left( s,t\right) =0.3\exp \left( -\frac{\left\vert s-t\right\vert }{0.3%
}\right) ,\text{ \ \ for }s,t\in \left[ 0,1\right].
$

In Figure \ref{curvascontamin} we display the three models, the grey functions
are generated under the central model while the red ones are the outliers.
\begin{figure}[ht]
\begin{center}
\includegraphics[width=8cm]{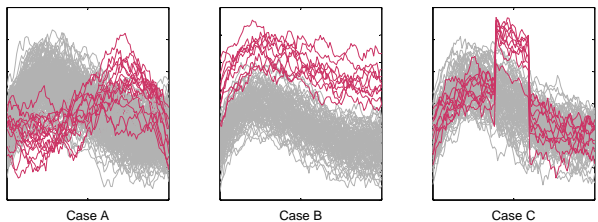}
\end{center}
\vspace{-15pt} \caption{Each figure exhibits one of the
contamination distributions.} \label{curvascontamin}
\end{figure}

\begin{figure}[tpb]
\begin{center}
\includegraphics[width=8cm]{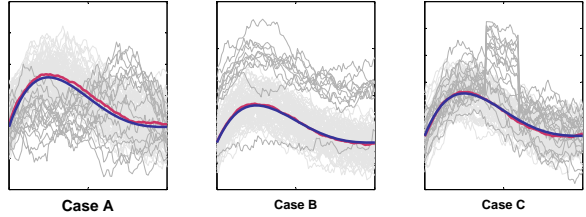}
\end{center}
\vspace{-15pt} \caption{In each figure the pruned functions are dark grey, the theoretical central curve is blue and the
estimated central curve is red.} \label{curvascentros}
\end{figure}

%\begin{figure}[tbp]
%\begin{center}
%\includegraphics[width=8cm]{correlac-poda-gap.pdf}
%\end{center}
%\vspace{-15pt} \caption{Each figure displays an estimation of the covariance for each contamination distribution.} \label{curvascova}
%\end{figure}

For each model we performed 500 replicates where a maximum of either
$20$, $30$ or $40\%$ of the observations were pruned. Each
function was observed at equidistant points on the interval $[0,1]$, and
the distance between two observations was 0.01. We
 report the mean number of outliers
pruned in table \ref{tablacanterrores}. The procedure was very successful at
detecting outliers in
Cases A and B, however these results changed significantly for Case C where
on average more than three outliers are not pruned. Moreover, in
Table \ref{tablaproppodada} we show the proportion of observations
that are effectively trimmed for each model and trimming
proportion and, it can be seen that this proportion does not change as
the trimming rate increases.

\begin{table}[tbp]
%\begin{table}[h]
\centering
\begin{tabular}{cccc}
\hline
& \multicolumn{3}{c}{Trimmed proportion} \\
& $0.4$ & $0.3$ & $0.2$ \\ \hline
Case A & $8.08$ & $8.06$ & $8.11$ \\
Case B & $9.992$ & $9.992$ & $9.994$ \\
Case C & $6.43$ & $6.43$ & $6.42$ \\ \hline
\end{tabular}%
\caption{Mean number of outliers detected for each model and
trimming proportion.} \label{tablacanterrores}
\end{table}

\begin{table}[tbp]
%\begin{table}[h]
\centering
\begin{tabular}{cccc}
\hline
& \multicolumn{3}{c}{Trimmed proportion} \\
& $0.4$ & $0.3$ & $0.2$ \\ \hline
Case A & $12.11$ & $12.10$ & $12.20$ \\
Case B & $13.75$ & $13.75$ & $13.84$ \\
Case C & $11.83$ & $11.88$ & $11.96$ \\ \hline
\end{tabular}%
\caption{Mean effective trimming rate for each model and trimming
proportion.} \label{tablaproppodada}
\end{table}

In order to analyze the performance of the central and correlation estimates,
 we consider the  $L_2$--error $(L_2E)$. Let
$\hat{\mu}(t)$ be the estimate of $\mu(t)$  and $\hat{\Sigma}(t,s)$
be the estimate of $\Sigma(t,s)$, where $t,s:t_1,\dots,t_N$, then
$ L_2E(\hat{\mu})=\sqrt{\frac{1}{N}\sum_{i=1}^{N} (\hat{\mu}(t_i)-\mu(t_i)) ^{2}},$

and
$L_2E(\hat{\Sigma})=\sqrt{\frac{1}{N^2}\sum_{i,j=1}^{N} (\hat{\Sigma}(t_i,s_j)-\Sigma(t_i,s_j)) ^{2}}. $

Table \ref{mseposfd} exhibits the $L_2$--error for the
location estimates. The first column shows the results for the
trick estimate: we consider these results to be the benchmark. From
the $2^{nd}$ to the $4^{th}$ columns the results for the RT--estimates
are displayed for three  different trimming levels bounds.
Finally the results for the IT--estimates proposal with
$\alpha=0.5$ and the same pruning proportion as in our case with
hard rejection weights are given on the last three columns.
The results for the RT--estimates do not change much for
the different estimates. This is because the effective trimming
rate and the number of outliers pruned in each
 replicate are practically the same.
 IT--estimates show a better performance when the trimming rate
 is lower, because the contamination proportion is even smaller than the trimming rate.
 For Models A and B our estimates outperform the IT--estimates, but in Case C the IT--estimates are better because our
 detection of outliers is less effective. Nevertheless, the performances of the two
 estimates are very similar and close to the trick estimates. Finally, Table \ref{msecorrfd}
 shows the same results for the correlation matrix estimates. The comparison among the estimates
 shows strong similarities,
 the main difference being that for Case C our estimates outperform the IT--estimates.

%\begin{table}
\begin{table}[tpb]
\centering
\begin{tabular}{cccccccc}
\hline
& Trick & \multicolumn{3}{l}{RT} & \multicolumn{3}{l}{IT} \\
&  & 0.4 & 0.3 & 0.2 & 0.4 & 0.3 & 0.2 \\ \cline{2-8}
Case A & 0.0513 & 0.0637 & 0.0640 & 0.0636 & 0.0725 & 0.0682 & 0.0640 \\
Case B & 0.0513 & 0.0595 & 0.0595 & 0.0596 & 0.0722 & 0.0677 & 0.0630 \\
Case C & 0.0513 & 0.0715 & 0.0714 & 0.0711 & 0.0705 & 0.0670 & 0.0628 \\
\hline
\end{tabular}%
\caption{Mean square errors for the location function estimate for
each model and trimming proportion for the benchmark (Trick),
RT--estimates and IT--estimates.} \label{mseposfd}
\end{table}

%\begin{table}
\begin{table}[tbp]
\centering
\begin{tabular}{cccccccc}
\hline
& Trick & \multicolumn{3}{l}{RT} & \multicolumn{3}{l}{IT} \\
&  & 0.4 & 0.3 & 0.2 & 0.4 & 0.3 & 0.2 \\ \cline{2-8}
Case A & 0.0779 & 0.1159 & 0.1160 & 0.1153 & 0.1830  & 0.1544 & 0.1262 \\
Case B & 0.0779 & 0.0957 & 0.0953 & 0.0960 & 0.1833  & 0.1543 & 0.1213 \\
Case C & 0.0779 & 0.1213 & 0.1209 & 0.1213 & 0.1814  & 0.1525 &  0.1240 \\
\hline
\end{tabular}%
\caption{Mean square errors for the correlation function estimate
for each model and trimming proportion for the benchmark (Trick),
RT-estimates and IT-estimates.} \label{msecorrfd}
\end{table}
The matlab codes for computing the RT-estimate are included as supplemental files.

\subsection{Other Classical Statistical Problems}
In this Section demonstrate the usefulness of the trimming
procedure for other classical statistical problems. The best known
dimension reduction technique is that of the principal components, the aim of which is
to find a linear combination of the original variables that are
uncorrelated and that explain as much of the variability as possible. We
consider the same models as in Section 4.2, and trim up to $40\%$
of the data.   The results are plotted in Figure \ref{grafico-
CPA}, in which it can be seen that in every case all the outliers were trimmed. It can be seen
that for the cases where the data set is complete (without
trimming) the shape of the first principal component weight
function is similar to the mean of the contamination. In the three
cases, the principal components for the trimmed data sets are
similar to the function corresponding to the data set without the
outliers.
\begin{figure}[h]
\begin{center}
\includegraphics[width=6cm]{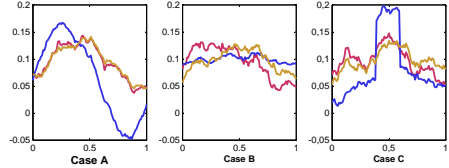}
\end{center}
\vspace{-15pt} \caption{In each figure the weight functions for
 the first principal components are represented for each contamination model.
 The blue function is for the complete data set, the yellow function is for
 the data without the outliers and the red one is for the RT data.} \label{grafico- CPA}
\end{figure}

\section{A real data example}

We now consider the data set of NOx emissions collected in Barcelona,
Spain, during the first semester of 2005. The data consists of 127
curves measured 24 times a day, and were first analyzed by
Febrero et al. (2007, 2008), in both cases they carried out procedures to detect outliers.
There are 12 days for which observations are missing, so they considered
the restricted data set conformed by 115 functions. In their first
study they found a two cluster structure, corresponding to working
days and  nonworking days (i.e., Saturdays, Sundays and holidays).
 They also provided location, scale estimates and set
confidence for the NOx data set.

Furthermore, they introduced a distance based procedure to detect
outliers. They applied the procedure to the entire set of  115 observations
and also to each of the two clusters separately.
Febrero et al. (2008) proposed  a depth-based criterion for outlier
identification, and by using  different depth notions  on
the same data set they proceeded to identify the outliers. In this case
they did not analyze the whole data set, but instead they took the subsets of working
days and non-working days separately.

The outliers detected by Febrero et al. for the non-working days
in both papers are the same (March $19^{th}$ and April $30^{th}$),
while the outliers for the working days do not strictly coincide:
in their second paper they detect two outliers (March $18^{th}$ and
April $29^{th}$) while in their first they also detect March
$11^{th}$ to be an outlier.

In this Section we apply our procedure to detect the
outliers in the data set and compare the results with those obtained by
Febrero et al. (2007, 2008).

 When trimming up to $5\%$ of the observations,
the results obtained for the working and non-working days coincide
with those obtained by Febrero et al. (2007). Figure
\ref{barpodagap} (b) and (c), shows the observations
corresponding to the working and non-working days respectively.

 In both cases the solid red observations are the outliers detected by the three procedures and the dashed red line
 is the new outlier detected only by Febrero el al. (2008).

 The analysis of the complete data set can only be compared with the results obtained on Febrero et al. (2007).
 We detect three outliers, two of which coincide with those found by Febrero el al. (2007) (March $18^{th}$ and April $29^{th}$), and
  these observations are
 shown in Figure \ref{barpodagap} (a) as a solid red line, Febrero et al. (2007)
 classify as outliers two other observations that we do
 not (March $11^{th}$ and May$2^{nd}$), which are plotted as dashed red lines and we find in
 addition an outlier that they did not, which is indicated as a solid blue line and
 corresponds to March $16^{th}$.
 The NOx level measured during the dawn and morning of March $16^{th}$ was very high, however the NOx level during the afternoon is in the bulk of the data.
But for March $11^{th}$ and May$2^{nd}$ the NOx levels during the afternoon are high compared with the others NOx levels measured at the same time, and in general these values are higher than the bulk of the data only for very short periods of time.

%\textcolor{red}{No semana santa fue la siguiente y en cataluña son feriados el viernes y el lunes siguiente}.
%\textcolor{red}{Acá no se si agregar o no una frase que diga que nosotros siempre detectamos los que son realmente
%outliers y que los otros en realidad si uno se fija bien estan metidos en el medio y que el que
%nosotros detectamos y ellos no simplemente esta bastante afuera, SIGO CON LA MISMA DUDA.}
{\begin{figure}[h]
\begin{center}
\includegraphics[width=6cm]{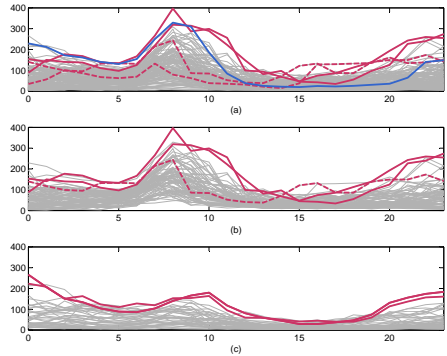}
\end{center}
\vspace{-15pt} \caption{The red lines stand for the outliers detected by Febrero et al. (2007) and by our procedure, the dashed red lines are the observations detected as outliers only by Febrero et al. (2007) and the  solid blue lines are the observations detected as outliers only by our procedure. (a) Outlier detection for the complete data. (b) Outlier detection for the working days. (c) Outlier detection for the non-working days.}
\label{barpodagap}
\end{figure}
}

\section{Concluding Remarks}

We have herein presented a new robust estimation method based on random projections that yields robust estimates of location and scatter in general Hilbert spaces. This procedure is data adaptive because it trims only the observations that are far apart from the bulk of the data.
It is a particulary suitable procedure for high dimensional data and for functional data because is these estimates are very fast to compute.

%
%\begin{center}
%\section*{REFERENCES}
%\end{center}
%
%
%\begin{list}{}{\leftmargin.5cm\listparindent-.5cm} \item\hspace{-.2cm}
%\vspace{-0.75cm}

\end{document}